\newtheorem{prop}{Proposition}
\title{Optimal strategy to certify quantum nonlocality}
\author[1,2,*]{S. G\'omez}
\author[3]{D. Uzc\'ategui}
\author[1,2]{I. Machuca}
\author[1,2]{E. S. Gómez}
\author[1,2]{S. P. Walborn}
\author[1,2]{G. Lima}
\author[3]{D. Goyeneche}
\affil[1]{Departamento de Física, Universidad de Concepci\'on, 160-C Concepci\'on, Chile.}
\affil[2]{ANID—Millennium Science Initiative Program—Millennium Institute for Research in Optics, Universidad de Concepci\'on, Concepci\'on 160-C, Chile}
\affil[3]{Departamento de F\'{i}sica, Facultad de Ciencias B\'{a}sicas, Universidad de Antofagasta, Casilla 170, Antofagasta, Chile.}
\affil[*]{santgomez@udec.cl}
\begin{abstract}
 Certification of quantum nonlocality plays a central role in practical applications like device-independent quantum cryptography and random number generation protocols. These applications entail the challenging problem of certifying quantum nonlocality, something that is hard to achieve when the target quantum state is only weakly entangled, or when the source of errors is high, e.g. when photons propagate through the atmosphere or a long optical fiber. Here we introduce a technique to find a Bell inequality with the largest possible gap between the quantum prediction and the classical local hidden variable limit for a given set of  measurement frequencies. Our method represent an efficient strategy to certify quantum nonlocal correlations from experimental data without requiring extra measurements, in the sense that there is no Bell inequality with a larger gap than the one provided. Furthermore, we also reduce the photodetector efficiency required to close the detection loophole. We illustrate our technique by improving the detection of quantum nonlocality from experimental data obtained with weakly entangled photons.
\end{abstract}
\begin{document}

\flushbottom
\maketitle
\section{Introduction} 
Quantum nonlocality plays a fundamental role in flourishing quantum technologies, such as device \cite{MY98,ABGPS07} and semi-device \cite{PB11} independent quantum cryptography, and genuine random number generation \cite{PAM10}, as well in fundamental aspects of quantum physics. In these applications, the certification of nonlocality is usually a required step. In the ideal case, for any set of nonlocal correlations, there exists a Bell inequality that is violated \cite{BCPSW14}. However,  certification of nonlocality can be hard to achieve in practice due to experimental errors. This is especially true when the optimal quantum state, i.e. the state producing the maximal violation of a Bell inequality, is weakly entangled \cite{AMP12}. This problem plays a relevant role in tight Bell inequalities, that might be maximally violated by partially entangled quantum states \cite{CGLMP02,VW11}.  These inequalities are particularly useful as they are known to maximize the randomness that can be certified in a Bell scenario. For instance, a recent experiment using near-ideal two-qubit states failed to certify quantum nonlocality for a weakly entangled quantum systems \cite{GMMGCJAL19}.  

Bell inequalities can be used to test that a set of correlations cannot be described by a local hidden variable (LHV) model. On the one hand, bipartite Bell inequalities can have a large ratio between the achievable quantum and LHV limits, equal to  $\sqrt{n}/\log{n}$, for $n$ settings and $n$ outputs in $n$ dimensional Hilbert spaces 
\cite{JP11}. However, when testing quantum nonlocality, one is particularly interested in a state and measurements that maximize the violation of a given Bell inequality. From the experimental perspective, a larger theoretical violation increases the chance to certify quantum nonlocality in the laboratory. Nonetheless, sometimes experiments are not conclusive to certify nonlocality. When failing to test nonlocality in the laboratory, one can  choose another Bell inequality with a larger gap between the LHV and quantum values, thus increasing the chances for success. However, the cost of this option is to implement a new experiment, as the optimal settings of the new Bell inequality most likely differ from the original one. This procedure consumes additional time and resources. 

Thus, a fundamental question arises: \emph{Can we certify quantum nonlocality from experimental data that previously failed to violate a target Bell inequality?} In this work, we find necessary and sufficient conditions to provide a conclusive answer to this question, for any bipartite scenario. In addition, we present an optimization method that finds a Bell inequality that maximizes the chances to detect quantum nonlocality, among the entire set of inequalities of a given scenario. This result has a wide range of practical applications including communication complexity problems, where the advantage in communication is an increasing function of the quantum-LHV value gap \cite{BZPZ04,TZB20}.

\section{Method}\label{sec:optimization}
In this section, we introduce a method that provides the largest possible gap between the quantum and LHV predictions regarding a fixed set of experimental data. In particular, this procedure allows us to determine whether a set of experimental data is genuinely nonlocal or not, i.e. whether there is a Bell inequality that can certify quantum nonlocality from a given data. Our method represents an efficient certification of nonlocal correlations, which can be applied to experimental data without requiring extra measurements. That is, it produces a Bell inequality that maximizes the chances to detect quantum nonlocality from a given set of statistical data among the entire set of Bell inequalities.

A bipartite Bell inequality \cite{BCPSW14} has the form
\begin{equation}\label{Bellineq}
\sum_{x,y=0}^{m-1}\sum_{a,b=0}^{d-1} s^{ab}_{xy}\,p(a,b|x,y)\leq\mathcal{C}(s),
\end{equation}
where $p(a,b|x,y)$ is the probability of obtaining outcomes $a,b\in\{0,\dots,d-1\}$ when inputs $x,y\in\{0,\dots,m-1\}$ are chosen by two observers Alice and Bob, respectively.  Here, $\mathcal{C}(s)$ denotes the maximal value of the left hand side in (\ref{Bellineq}) that can be achieved by considering \emph{local hidden variable} (LHV) theories, whereas quantum mechanics might predict a violation of this inequality  \cite{B64}. Without loss of generality, we can restrict out attention to parameters within the set $-1\leq s^{ab}_{xy}\leq1$, for every $a,b=0,\dots,d-1$ and $x,y=0,\dots,m-1$. 

Quantum joint probability distributions satisfy the no-signaling principle, implying that information cannot be instantaneously transmitted between distant parties. In particular, the outcome of one party cannot reveal information about the input of the other. That is,
\begin{equation}\label{ns1}
\sum_{b=0}^{d-1}p(a,b|x,y)=\sum_{b=0}^{d-1}p(a,b|x,y')=:p_A(a|x),
\end{equation}
and
\begin{equation}\label{ns2}
\sum_{a=0}^{d-1}p(a,b|x,y)=\sum_{a=0}^{d-1}p(a,b|x',y)=:p_B(b|y),
\end{equation}
for every $x\neq x'$ and $y\neq y'$, where $p_A(a|x)$ and $p_B(b|y)$ are the marginal probability distributions associated to Alice and Bob, respectively.\\
\indent Let us now consider a set of relative frequencies  $f(a,b|x,y)$ of occurrence for outcomes $a,b$ when $x,y$ is measured by Alice and Bob, respectively, obtained from experimental data. The no-signaling constraints (\ref{ns1}) and (\ref{ns2}) do not occur due to errors but they can be recovered by minimizing the Kullback-Leible divengence \cite{SRZBC18}:
\begin{equation}\label{KLdivergence}
D_{KL}(\vec{f}||\vec{P})=\sum_{a,b,x,y}f(x,y)f(a,b|x,y)\log_2\left[\frac{f(a,b|x,y)}{P(a,b|x,y)}\right],
\end{equation}
where $f(x,y)$ is the relative frequency of implementing a measurement $x$ by Alice and $y$ by Bob, and $p(a,b|x,y)$ the optimization variables, consisting of a joint probability distribution within the framework of quantum mechanics. The minimization procedure (\ref{KLdivergence}) is equivalent to maximizing the likelihood of producing the observed frequency $p(a,b|x,y)$, see Appendix D1 in \cite{SRZBC18}.

The experimental prediction of a Bell inequality (\ref{Bellineq}) defined by coefficients $s^{ab}_{xy}$ is given by  
\begin{equation}\label{Qexp}
\mathcal{Q}=\sum_{x,y=0}^{m-1}\sum_{a,b=0}^{d-1} s^{ab}_{xy}\,p(a,b|x,y),
\end{equation}
having associated an error propagation $\Delta\mathcal{Q}$, see section A, supplementary information, for a detailed treatment of errors. An experimentally obtained probability distribution $p(a,b|x,y)$, associated to errors $\Delta p(a,b|x,y)$, is certainty nonlocal if $\mathcal{Q}-\mathcal{C}>\Delta\mathcal{Q}$, for a given Bell inequality. However, sometimes quantum nonlocality cannot be revealed due to the amount of errors,  especially when a weakly entangled quantum state produces the maximal violation of the inequality. Under such situation, the method introduced provides a new Bell inequality that increases the chances to prove quantum nonlocality for a given set of probability distributions $p(a,b|x,y)$, associated to experimental errors $\Delta p(a,b|x,y)$. The method consists in solving the following nonlinear problem:
\begin{equation}\label{max}
R=\max_{s}\frac{\mathcal{Q}(s)-\Delta \mathcal{Q}(s)+dm}{\mathcal{C}(s)+dm},
\end{equation}
for a fixed set of statistical data, where the optimization is implemented over all coefficients $s^{ab}_{xy}$ defining a Bell inequality (\ref{Bellineq}). The shifting factor $dm$ introduced in (\ref{max}) avoids divergence of the function $R$, otherwise, the output inequality would be any such that the LHV value $\mathcal{C}(s)$ vanishes. Optimization (\ref{max}) is typically hard to implement, due to the presence of a large amount of local maximum values. To maximize the chances to find the global maximum, we studied four different optimization techniques: a gradient-descendent method \cite{C44} and three direct search methods: Nelder–Mead \cite{NM65}, differential evolution \cite{SP97}, and simulated annealing \cite{KGV83}. The best optimization strategy consists in taking a large number of random seeds and apply a gradient-descendent method, according to our experience. All our numerical simulations were implemented in Python. A code to solve optimization (\ref{max}) is available at GitHub \cite{code}. 

Let us summarize the method as follows:
\begin{prop}\label{prop1}
A Bell inequality having LHV value $\mathcal{C}(s)$, for which a quantum value $\mathcal{Q}(s)$ is achieved with errors $\Delta\mathcal{Q}(s)$ is nonlocal if and only if $R>1$ in Eq.(\ref{max}).
\end{prop}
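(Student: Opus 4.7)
The plan is to derive the statement as a direct algebraic equivalence between the standard definition of an experimentally certified Bell violation and the strict inequality $R > 1$. I read the proposition as asserting that the data certifies nonlocality through some Bell inequality in the scenario if and only if $R>1$, so the task reduces to verifying that the shift $+dm$ in the numerator and denominator of~(\ref{max}) preserves the threshold at $1$ and that maximization over $s$ is indeed the correct operation.

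First, I would fix an arbitrary coefficient vector $s$ with $|s^{ab}_{xy}|\leq 1$. By the discussion immediately preceding~(\ref{max}), the experimental data certifies a violation of the Bell inequality defined by $s$ precisely when $\mathcal{Q}(s) - \Delta\mathcal{Q}(s) > \mathcal{C}(s)$. Adding $dm$ to both sides and dividing through by $\mathcal{C}(s)+dm$ yields
\begin{equation*}
\mathcal{Q}(s) - \Delta\mathcal{Q}(s) > \mathcal{C}(s) \quad\Longleftrightarrow\quad \frac{\mathcal{Q}(s) - \Delta\mathcal{Q}(s) + dm}{\mathcal{C}(s) + dm} > 1,
\end{equation*}
provided the denominator is strictly positive. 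Taking the supremum over the compact cube $s\in[-1,1]^{d^2m^2}$, which is attained by continuity of both $\mathcal{Q}$ and $\mathcal{C}$, then gives $R>1 \iff \exists\, s:\mathcal{Q}(s)-\Delta\mathcal{Q}(s) > \mathcal{C}(s)$, i.e., some Bell inequality is certifiably violated by the data. Both directions of the ``if and only if'' follow immediately.

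The main obstacle is rigorously establishing $\mathcal{C}(s)+dm>0$ uniformly in $s$, since without positivity of the denominator the algebraic manipulation above is invalid. Using $|s^{ab}_{xy}|\leq 1$ together with the normalization $\sum_{a,b}p(a,b|x,y)=1$ on the deterministic strategies that achieve $\mathcal{C}(s)$, one obtains the naive bound $|\mathcal{C}(s)|\leq m^2$; the shift $dm$ is therefore sufficient as soon as $d\geq m$. If $d<m$ I would either sharpen the bound by exploiting the no-signaling marginal constraints~(\ref{ns1})-(\ref{ns2}), which reduce the effective number of free parameters entering $\mathcal{C}(s)$, or simply observe that replacing $dm$ by any strictly larger constant dominating $|\mathcal{C}(s)|$ leaves both the definition of $R$ and the threshold $R>1$ unchanged, so the statement of the proposition is unaffected. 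With positivity secured, the two-line algebraic equivalence above closes the argument.
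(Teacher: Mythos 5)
Your argument is correct and is essentially the paper's own proof: the paper evaluates the identity $\mathcal{Q}(s)-\Delta\mathcal{Q}(s)-\mathcal{C}(s)=(R-1)\left(\mathcal{C}(s)+dm\right)$ at the maximizer of (\ref{max}) and invokes maximality to dispose of the case $R\leq 1$, while you state the same algebraic equivalence pointwise in $s$ and then pass to the supremum; the two are the same argument in a different order. The one substantive point where you go beyond the paper is the positivity of the denominator $\mathcal{C}(s)+dm$, which the paper's proof uses tacitly, and your worry is legitimate: with $|s^{ab}_{xy}|\leq 1$ the sharp bounds are $-m^2\leq\mathcal{C}(s)\leq m^2$ (take $s^{ab}_{xy}\equiv -1$, for which every deterministic strategy yields $-m^2$), so the shift $dm$ keeps the denominator nonnegative only when $d\geq m$; the single corner $s\equiv -1$ with vanishing denominator at $d=m$ is harmless because there the numerator equals $-\Delta\mathcal{Q}(s)<0$. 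For $d<m$ the equivalence genuinely breaks for (\ref{max}) as written: with $d=2$, $m=3$ and $s\equiv -1$ one gets $\mathcal{C}(s)=-9$, $\mathcal{Q}(s)=-9$, hence a ratio $1+\Delta\mathcal{Q}(s)/3>1$ although no Bell inequality is certifiably violated, so $R>1$ would not imply nonlocality of the data. Of your two proposed repairs only the second works—no-signaling cannot sharpen the bound, since $\mathcal{C}(s)=-m^2$ is actually attained—but replacing $dm$ by any constant larger than $m^2$ (or restricting the optimization to $\mathcal{C}(s)>-dm$) restores the pointwise equivalence and hence the proposition; this is exactly the hypothesis the paper leaves implicit, and it is automatically satisfied in the scenario treated experimentally there ($d=m=2$).
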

\begin{proof}
Let $s^{ab}_{xy}$ be the parameters of the Bell inequality producing the maximum value $R$ in (\ref{max}). Therefore, it is simple to show that $\mathcal{Q}(s)-\Delta\mathcal{Q}(s)-\mathcal{C}(s)=(R-1)(\mathcal{C}(s)+dm)$. Therefore, the statistical data is nonlocal, i.e. $\mathcal{Q}(s)-\Delta\mathcal{Q}(s)-\mathcal{C}(s)>0$, if and only if $R>1$. Given that $R$ takes the maximal possible value among all Bell inequalities of the scenario, if $R\leq1$ then there is no Bell inequality that can detect quantum nonlocality for the given statistical data. 
\end{proof}
Here, genuinely nonlocal means that there exists a Bell inequality (\ref{Bellineq}), associated to coefficients $s^{ab}_{xy}$, such that the amount of experimental errors do not overpass the gap between the quantum violation $\mathcal{Q}(s)$ and the LHV prediction $\mathcal{C}(s)$. Therefore, $R>1$ is equivalent to saying that there is a way to experimentally certify quantum nonlocality for a given set of experimental data. On the other hand, if $R\leq1$ then there is no way to decide whether the physical system is prepared in a nonlocal quantum state or not. 

Note that the gap of a Bell inequality $\mathcal{Q}(s)-\Delta\mathcal{Q}(s)-\mathcal{C}(s)$ can be artificially enlarged by a multiplicative factor $\kappa>1$ by considering a Bell inequality having a rescaled set of  coefficients $\tilde{s}^{ab}_{xy}=\kappa s^{ab}_{xy}$. In order to avoid this scaling problem, it is convenient to refer to the \emph{relative gap} of a Bell inequality, given by the gap of a Bell inequality for which $\mathcal{C}(s)=1$. Assuming this convention, the global maximum value of $R$ in (\ref{max}) implies the largest possible relative gap, as we show below.
\begin{prop}
The Bell inequality associated to the global maximum of the function $R$, introduced in (\ref{max}), produces the largest possible relative gap between the LHV and quantum values, among all Bell inequalities of a given scenario. 
\end{prop}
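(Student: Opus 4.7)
The plan is to combine the algebraic identity already established in the proof of Proposition~\ref{prop1} with the degree-one homogeneity of $\mathcal{Q}$, $\Delta\mathcal{Q}$ and $\mathcal{C}$ in the Bell coefficients $s^{ab}_{xy}$.

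First, I would evaluate the identity $(R(s)-1)(\mathcal{C}(s)+dm)=\mathcal{Q}(s)-\Delta\mathcal{Q}(s)-\mathcal{C}(s)$ on the canonical slice $\mathcal{C}(s)=1$ that defines the relative gap $g(s):=\mathcal{Q}(s)-\Delta\mathcal{Q}(s)-1$. It collapses to $R(s)=1+g(s)/(1+dm)$, an affine strictly increasing function of $g$, so on this slice the argmax of $R$ coincides with the argmax of $g$. Every Bell inequality with $\mathcal{C}(s)>0$ admits a unique representative on this slice through the rescaling $s\mapsto s/\mathcal{C}(s)$, and the relative gap is invariant under such rescaling, so the comparison on the slice is the one of physical interest.

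Second, I would show that extending the optimization of $R$ off the canonical slice does not change the direction that is singled out. Writing $s=\kappa\tilde{s}$ with $\mathcal{C}(\tilde{s})=1$, homogeneity gives
\begin{equation*}
R(\kappa\tilde{s})=1+\frac{\kappa\, g(\tilde{s})}{\kappa+dm},
\end{equation*}
which is strictly monotonically increasing in $\kappa$ whenever $g(\tilde{s})>0$ and tends to $1+g(\tilde{s})$ as $\kappa\to\infty$. Hence $\sup_{\kappa} R(\kappa\tilde{s})=1+g(\tilde{s})$, and taking the supremum over directions $\tilde{s}$ yields $\sup_{s}R(s)=1+g_{\max}$, realized by the direction with the largest relative gap.

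The main obstacle is to reconcile this scale-invariant argument with the bounded box $|s^{ab}_{xy}|\leq 1$ used in~(\ref{max}). Within the box, the monotonicity in $\kappa$ forces the maximizer to lie on the boundary $\|s\|_\infty=1$, but one must still argue that the winning direction over this finite region is the one with maximal $g$; the shifting factor $dm$ is precisely what regularizes the comparison and prevents the pathological $\mathcal{C}(s)\to 0$ limit mentioned after~(\ref{max}). The cleanest way I see to close this step is to view the box as a mere numerical normalization: the argmax direction is selected by the scale-invariant ratio computed in the previous step, and~(\ref{max}) is a convenient bounded formulation of that selection, so the Bell inequality identified by the global maximum of $R$ inherits the maximality of $g$.
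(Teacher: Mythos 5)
Your first paragraph is, in essence, the paper's entire proof: the authors likewise rescale a given inequality by $\mathcal{C}(s)$ (invoking the linearity of the error propagation $\Delta\mathcal{Q}$ in the coefficients, which is your degree-one homogeneity), restrict attention to the slice $\mathcal{C}(s)=1$, and observe that there maximizing $R$ is the same as maximizing $\mathcal{Q}(s)-\Delta\mathcal{Q}(s)-1$, i.e.\ the relative gap. Up to that point your argument is correct and coincides with theirs.

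Where you go further---the explicit $\kappa$-analysis---you have correctly identified something the paper silently passes over: $R$ is not scale-invariant because of the shift $dm$, and the optimization (\ref{max}) is taken over the box $|s^{ab}_{xy}|\leq 1$, not over the slice $\mathcal{C}(s)=1$. Your own computation $R(\kappa\tilde{s})=1+\kappa g(\tilde{s})/(\kappa+dm)$ shows that along a direction $\tilde{s}$ with $\mathcal{C}(\tilde{s})=1$ the best value attainable inside the box is $1+g(\tilde{s})/\bigl(1+dm\,\|\tilde{s}\|_\infty\bigr)$, since $\kappa$ is capped at $1/\|\tilde{s}\|_\infty$. Hence the box-constrained maximizer of $R$ ranks directions by $g(\tilde{s})/\bigl(1+dm\,\|\tilde{s}\|_\infty\bigr)$ rather than by $g(\tilde{s})$ alone, and your closing move---viewing the box as ``a mere numerical normalization''---is an assertion, not an argument: a direction with slightly smaller relative gap but much smaller sup-norm (after normalizing $\mathcal{C}=1$) can yield a larger value of $R$ than the direction of maximal gap. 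So that final step is a genuine gap in your write-up; to be fair, it is equally a gap in the paper, whose proof simply declares the restriction to $\mathcal{C}(s)=1$ to be without loss of generality and never reconciles it with the box constraint or with the $\kappa$-dependence of $R$. If one reads the proposition as the paper implicitly does---the comparison is among inequalities normalized to $\mathcal{C}(s)=1$, equivalently among relative gaps, which are invariant under rescaling---then your first paragraph already proves the statement and the remainder of your analysis, while insightful, is not needed.
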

\begin{proof}
Without loss of generality, we can restrict our attention to Bell inequalities for which $\mathcal{C}(s)=1$. Indeed, this can always be done by considering the following rescaling of a given Bell inequality: $\mathcal{Q}'(s)=\mathcal{Q}(s)/\mathcal{C}(s)$,  $\Delta\mathcal{Q}'(s)=\Delta\mathcal{Q}(s)/\mathcal{C}(s)$ and $\mathcal{C}'(s)=1$. Here, we used the fact that error propagation is linear as a function of the coefficients $s^{ab}_{xy}$ defining the Bell inequality, see section A of supplementary information.  After the rescaling, optimization of $R$ is equivalent to maximize $\mathcal{Q}'(s)-\Delta\mathcal{Q}'(s)$ along all Bell inequalities satisfying $\mathcal{C}'(s)=1$, according to (\ref{max}). This is equivalent to maximizing the gap $\mathcal{Q}'(s)-\Delta\mathcal{Q}'(s)-1=\mathcal{Q}'(s)-\Delta\mathcal{Q}'(s)-\mathcal{C}'(s)$.
\end{proof}

In the next section, we apply our optimization method to the so-called \emph{tilted Bell inequality} \cite{AMP12}. This family of inequalities was used to demonstrate that almost two bits of randomness can be extracted from a quantum system prepared in a weakly entangled state. This property makes this inequality an ideal candidate to test our method due to the hardness to certify nonlocality in such case.

\section{Tilted Bell inequality}\label{sec:tilted} Certification of quantum nonlocality from experimental data is a challenging task in general. In a recent work,  a genuine random number generation protocol based on quantum nonlocality was experimentally demonstrated \cite{GMMGCJAL19}. Here, authors certified randomness from a physical system prepared in quantum states with concurrences $C=0.986$, $C=0.835$ and $C=0.582$. However, certification failed for $C=0.193$ and $C=0.375$. The inequality in question is the tilted Bell inequality \cite{AMP12}: 
\begin{equation}\label{belltilted}
\alpha \left[ p_{A}(0|0)-p_{A}(1|0) \right]+ \sum_{x,y=0}^{1}\sum_{a,b=0}^{1} (-1)^{xy}\left[ p(a=b|xy)-p(a \neq b|xy)\right] \leq \mathcal{C}_{\alpha},
\end{equation}
where $p_A(a|x)$ is the marginal probability distribution for Alice,  $\mathcal{C}_{\alpha}=\alpha+2$ and $\mathcal{Q}_{\alpha}=\sqrt{8+2\alpha^2}$, $\alpha\in[0,2]$. The  quantum state producing the maximal violation in (\ref{belltilted}) is given by $|\psi(\theta)\rangle=\cos(\theta)|00\rangle+\sin(\theta)|11\rangle$,
where $\theta=\frac{1}{2}\arcsin \left( \sqrt{\frac{1-(\frac{\alpha}{2})^{2}}{1+(\frac{\alpha}{2})^{2}}}\right) $. Optimal settings are provided by eigenvector bases of the following observables:
$A_{0}=\sigma_{z}$, $A_{1}=\sigma_{x}$, $B_{0}= \cos(\mu)\,\sigma_{z} + \sin(\mu)\,\sigma_{x}$ and $B_{1} = \cos(\mu)\,\sigma_{z} - \sin(\mu)\,\sigma_{x}$,
where $\mu=\arctan \left( \sqrt{\frac{1-(\frac{\alpha}{2})^{2}}{1+(\frac{\alpha}{2})^{2}}}\right)$.

\begin{figure}[htp]
\centering
\includegraphics[width=0.55\textwidth]{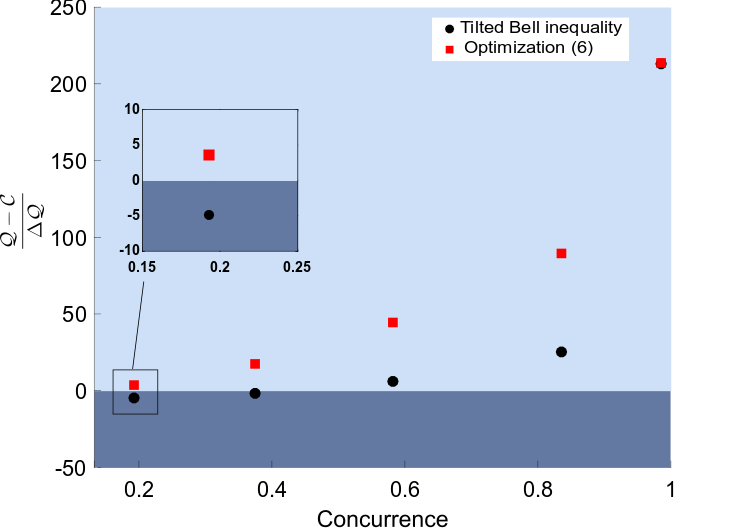}
\caption{The standard deviation number (SDN) as a function of concurrence. For each value of concurrence, the optimization procedure (\ref{max}) provides a  Bell inequality that increases the number of standard deviations of the quantum-LHV value gap. SDN is calculated for two cases: the original tilted Bell inequality [black circles] and an inequality arising from optimization (\ref{max}) [red squares]. In both cases, we consider experimental data, where quantum nonlocality can be certified in the light blue region. For concurrences $C=0.375$ and $C=0.193$, there is no quantum violation of the tilted Bell inequality, whereas inequalities arising from optimization (\ref{max}) produce a violation in all the cases.}
\label{Fig1}
\end{figure}
\newpage
We tested optimization (\ref{max}) from the photonic Bell inequality experiment described in section B of supplementary information (see Fig.S1). The experiment consisted of a high-quality source of polarization-entangled Bell states of the form $\ket{\psi(\theta)}$.  Entanglement of this optimal quantum state can be characterized by its concurrence, given by $C=\sin(2 \theta)$. By using optimization (\ref{max}), we improve the experimental violation of the tilted Bell inequality for high concurrences and, more importantly, we successfully demonstrate quantum nonlocality for low values of concurrence, i.e. $C=0.375$ and $C=0.193$, something that failed to be proven  when considering the tilted Bell inequality \eqref{belltilted} \emph{from the same statistical data}. In Fig. \ref{Fig1} we show the number of standard deviations of the quantum-LHV gap for the tilted Bell inequality and the one resulting from our optimization procedure (\ref{max}) (more details in Table \ref{table:one}).\\

This result can lead to interesting practical applications. For instance, note that the amount of bits of randomness that can be extracted from the protocol \cite{GMMGCJAL19} are identical to the number of bits that can be extracted by considering our optimized Bell inequality. This is so because both schemes consider the same experimental data. 

\begin{table}[h!]
\centering
   \caption{Summary of results. The concurrence was obtained from quantum state tomography. $\mathcal{Q}_{\alpha}$  refers to the experimental values of the Tilted Bell inequality. $\mathcal{C}_{\alpha}$ is the LHV bound for each $\alpha$. After implementing the method mentioned in section II, the results obtained are presented in the values $\mathcal{Q}$ and $\mathcal{C}$.}
\resizebox{0.8\textwidth}{!}{\begin{minipage}
{\textwidth} \begin{center}
        \begin{tabular}{ c  p {1 cm}  p {1 cm}   p {1 cm} p {1 cm} c }
        \hline 
        Concurrence & $0.193$ & $0.375$ & $0.582$  &  $0.835$ & $0.986$ 
        \\
        \hline \hline
        $\mathcal{Q}_{\alpha}$   & $3.890$ & $3.686$ & $3.418$ & $3.108$ & $2.819$  \\
        $(\pm) \Delta \mathcal{Q}_{\alpha}$ & $0.006$ & $0.008$ & $0.008$ & $0.007$ & $0.004$\\ 
        $\mathcal{C}_{\alpha}$ & $3.921$ & $3.702$ & $3.372$ & $2.937$ & $2.002$ \\ 
        $\frac{\mathcal{Q}_{\alpha}-\mathcal{C}_{\alpha}}{\Delta \mathcal{Q}_{\alpha}}$ & $-4.85$ & $-1.93$ & $5.83$ & $25.44$ & $213.06$ 
        \\
        \\
        $\mathcal{Q}$   & $1.436$  & $0.858$ & $2.308$ & $2.290$ & $1.819$ \\
        $(\pm) \Delta \mathcal{Q}$ & $0.001$ & $0.006$ & $0.007$ & $0.006$ & $0.004$  \\ 
        $\mathcal{C}$ & $1.429$ & $0.745$ & $1.994$ & $0.745$ & $1.429$ \\ 
        $\frac{\mathcal{Q}-\mathcal{C}}{\Delta \mathcal{Q}}$ &  $3.54$ & $17.75$ & $44.41$ & $89.60$ & $213.57$         \\    
        \\
        \hline \hline
       
        \end{tabular} 
        \end{center}
	\end{minipage}}
	\label{table:one}
\end{table}
    
\section{Closing the detection loophole}\label{sec:loophole}
In this section, we show that optimization procedure (\ref{max}), apart from increasing the quantum-LHV value gap, also reduces the detection efficiency required to close the detection loophole, for a fixed set of statistical data. 

Suppose that Alice and Bob have detector efficiencies $\eta_A$ and $\eta_B$, respectively. The minimal efficiencies required to violate a Bell inequality are given by the following procedure \cite{Pironio2010}: first, a two outcomes Bell inequality has to be written in a canonical form. Namely, it has to consider only one of its outcomes per party (we choose $a=b=0$), as we associate the other outcomes ($a=b=1$) to the cases where detectors do not fire correctly. It is simple to show that any bipartite Bell inequality (\ref{Bellineq}) with $m$ settings per side and two outcomes can be written as: 
\begin{equation}\label{effi1a}
\sum_{x,y=0}^{m-1} \tilde{s}_{xy}^{00} p(0,0|x,y)+\sum_{x=0}^{m-1} \tilde{s}_{Ax}^{\hspace{6px}0} p_A(0|x)+\sum_{y=0}^{m-1} \tilde{s}_{By}^{\hspace{6px}0} p_B(0|y)=\mathcal{C}(s),
\end{equation}
where 
\begin{eqnarray}
\tilde{s}^{00}_{xy}&=&\sum_{a,b=0}^1 (-1)^{a+b}s^{ab}_{xy},\nonumber\\
\tilde{s}_{Ax}^{\hspace{6px}0}&=&s_{Ax}^{\hspace{6px}0}-s_{Ax}^{\hspace{6px}1}+\sum_{y=0}^{m-1}\sum_{a=0}^1 (-1)^a s^{a1}_{xy},\nonumber\\
\tilde{s}_{By}^{\hspace{6px}0}&=&s_{By}^{\hspace{6px}0}-s_{By}^{\hspace{6px}1}+\sum_{x=0}^{m-1}\sum_{b=0}^1 (-1)^b s^{1b}_{xy}.\nonumber
\end{eqnarray}
These transformations arises from the identities shown in section C of supplementary information. For instance, for the tilted Bell inequality (\ref{belltilted}), we have the following canonical form:
\begin{equation}\label{canonical_tilted}
(\alpha/2-1) p_A(0|0)- 
 p_B(0|0)+p(0,0|0,0) + p(0,0|0,1) +p(0,0|1,0)-
 p(0,0|1,1)\leq \tilde{\mathcal{C}}_{\alpha},
\end{equation}
where $\tilde{\mathcal{C}}_{\alpha}=(\mathcal{C}_{\alpha}+\alpha-2)/4$.

Second, due to imperfect detectors, probabilities have to be transformed according to the the rule $p(0,0|x,y)\rightarrow \eta_A\eta_B p(0,0|x,y)$, $p(0|x)\rightarrow \eta_A\, p(0|x)$ and $p(0|y)\rightarrow \eta_B\, p(0|y)$, for every setting $x,y$. Therefore, the lower bound for the minimal efficiencies required to detect a quantum violation satisfies:
\begin{equation}\label{effi1b}
\eta_A\eta_B\sum_{x,y=0}^{1} \tilde{s}_{xy}^{00} p(0,0|x,y)+\eta_A\sum_{x=0}^{1} \tilde{s}_{Ax}^{\hspace{6px}0} p_A(0|x)+\eta_B\sum_{y=0}^{1} \tilde{s}_{By}^{\hspace{6px}0} p_B(0|y)=\mathcal{C}(s).
\end{equation}
We observe that optimization (\ref{max}) reduces the detection efficiency required to close the detection loophole, with respect to a target Bell inequality, for a fixed set of probability distributions $p(a,b|x,y)$, $p_A(0|x)$ and $p_B(0|y)$. Here, we show evidence of this fact by improving detection efficiencies for the experimental statistical distributions associated to the tilted Bell inequality (\ref{canonical_tilted}), see Figure \ref{efficiency}. \medskip

\begin{figure}[htp]
\centering
 \includegraphics[scale=0.48]{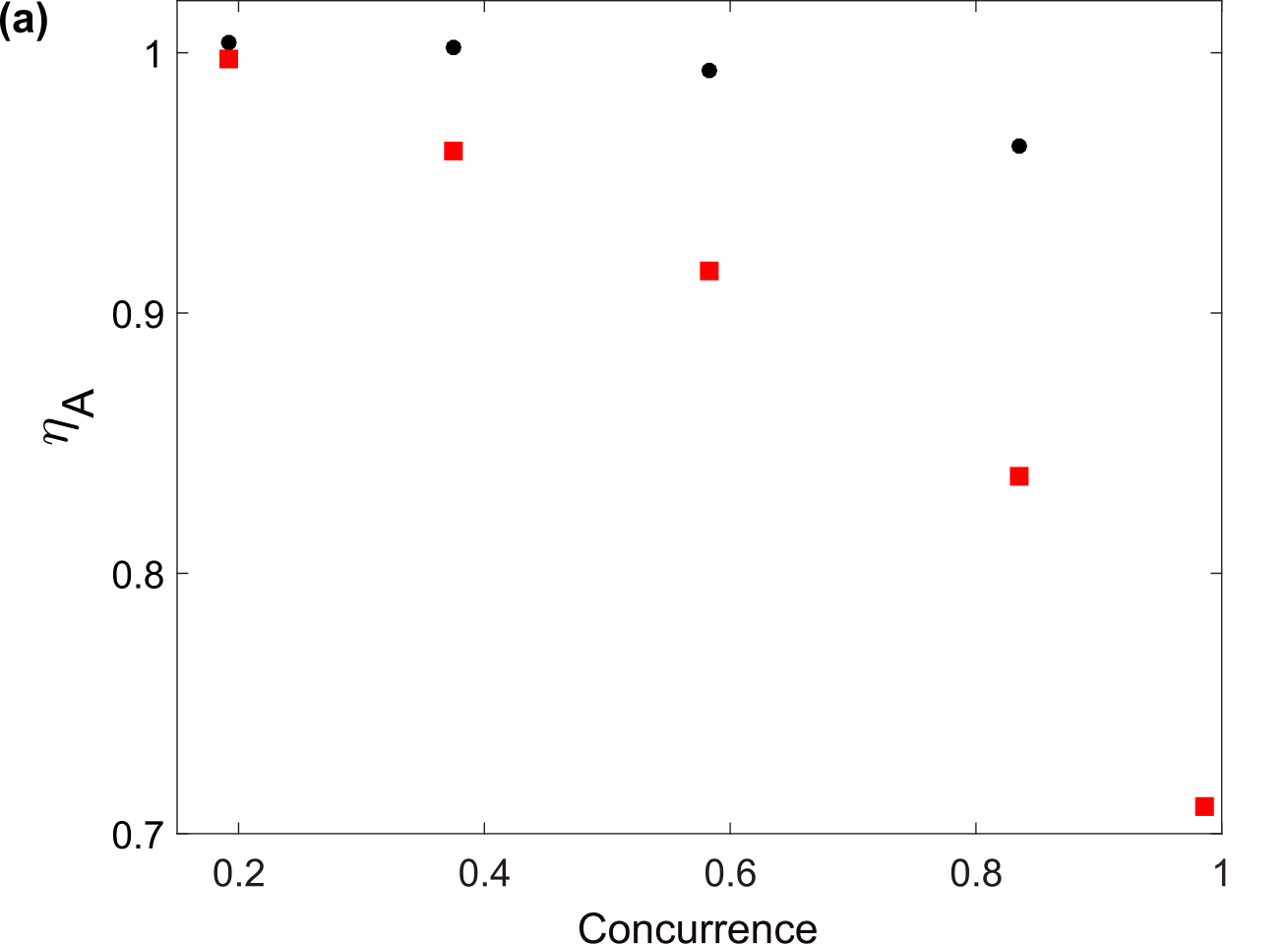}\includegraphics[scale=0.48]{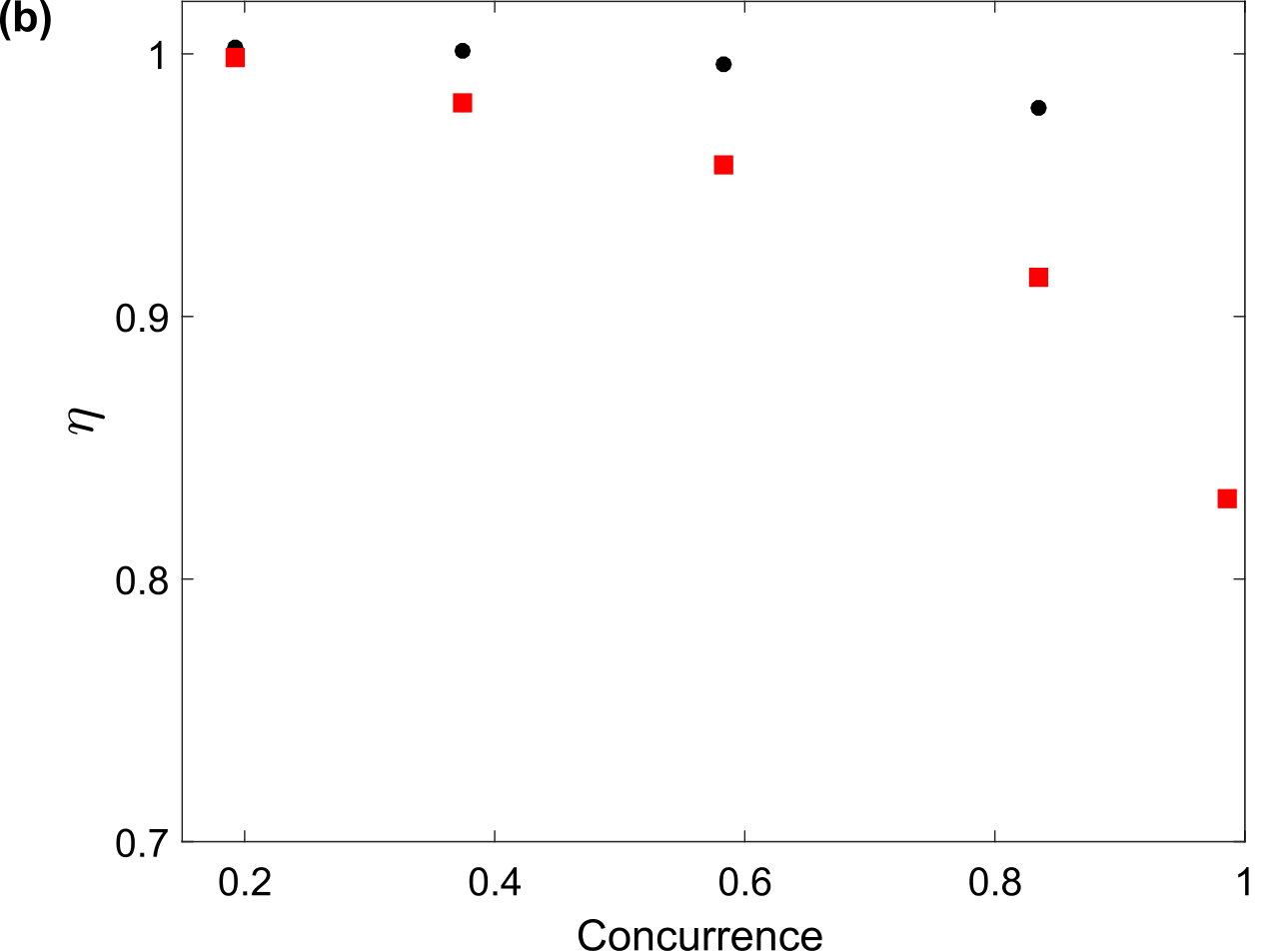}
\caption{(a) Asymmetric case ($\eta_B=1$) and (b) symmetric case ($\eta_A=\eta_B=\eta$). Minimum efficiencies required to close the loophole. We compare those associated to the tilted Bell inequality [black circles] with those inequalities arising from optimization (\ref{max}) [red squares], for the optimal statistical data of the tilted inequality \eqref{canonical_tilted}.}\label{efficiency}
\end{figure}

To summarize this section, we conclude that optimization (\ref{max}), for a given set of quantum statistics, produces a Bell inequality that requires a lower detection efficiency, with respect to a target inequality and the same set of statistical data. This property makes our method an interesting tool to increase the the chances to close the detection loophole from a given experimental data.

\section{Conclusions}
Violation of Bell inequalities is at the heart of quantum physics and defines a cornerstone for a wide range of quantum information protocols with real-world appeal.  We have shown how an ``experiment-inspired" optimization procedure can be applied to the search of Bell inequalities that increase the quantum-LHV gap with respect to a target Bell inequality, for a fixed set of experimental data. Furthermore, the gap provided by the optimization procedure is the largest possible among the entire set of Bell inequalities having LHV equal to 1, something that can be assumed without loss of generality. When nonlocality certification from a given set of experimental data fails, our method provides a ``second chance", without requiring one to perform any additional measurement. Furthermore, our optimization also provides a gain in the minimal detection efficiency required by a fixed statistical set, a crucial ingredient to  maximize the chances to close the detection loophole in experiments.  We illustrated our technique by considering  experimental data associated to the maximal violation of the so-called \emph{tilted Bell inequality}. Here, we considerably increased some previously obtained gaps, a fact that allowed us to certify nonlocality for weakly entangled states, something that was not possible to do with the tilted Bell inequality (see Section \ref{sec:tilted}). Furthermore, we also showed how the detection efficiencies can be decreased in this particular case after implementing our optimization procedure (see Section \ref{sec:loophole}). 

Our technique finds application in device-independent protocols, random number generation, communication complexity and any practical application based in quantum nonlocality.

\section*{Acknowledgements}
This work was supported by the Fondo Nacional de Desarrollo Científico y Tecnológico (FONDECYT) (Grants No. 11180474, No. 1190901, No. 1200266, No. 1200859 and No. 3210359), and the National Agency of Research and Development (ANID) Millennium Science Initiative Program ICN17\_012. DU is supported by Grant FONDECYT Iniciaci\'{o}n number 11180474 and MINEDUC-UA project  ANT1956 (Chile). \vspace{1cm}

\begin{center}
{\Large \textbf{Appendix}}
\end{center}

\appendix

\section{Error propagation}\label{ap1}
\noindent This section provides a general expression for the experimental error obtained by measuring the Bell inequality value. In particular, we show how errors in the photon counting number due to finite statistics propagates to $\Delta \mathcal{Q}$. Consider the following general expression for a Bell inequality: 
\begin{equation}\tag{S1}
 \mathcal{Q} =\sum\limits_{x,y = 0}^{m - 1}  \sum\limits_{a,b = 0}^{d-1} s_{xy}^{ab} p(ab|xy) + \sum\limits_{x = 0}^{m - 1}  \sum\limits_{a = 0}^{d-1} s_{x}^{a} p_{A}(a|x) + \sum\limits_{y = 0}^{m - 1}  \sum\limits_{b = 0}^{d-1} s_{y}^{b} p_{B}(b|y).
 \label{eq 1}
\end{equation}

\noindent Here we include both joint and marginal probability distributions. As is typical, the marginal probabilities are calculated from the join probabilities, and we average over all possible $x$ (or $y$), i.e. $ p(a|x) = \dfrac{1}{m} \sum_{y = 0}^{m-1}\sum_{b = 0}^{d-1} p(ab|xy)$ and $  p(b|y) = \dfrac{1}{m} \sum_{x = 0}^{m-1}\sum_{a = 0}^{d-1} p(ab|xy)$. Replacing these quantities in Eq.\eqref{eq 1} and rewriting $\mathcal{Q}$ in term of the coincidence count $c(ab|xy)$ we get
\begin{equation}\tag{S2}
   \mathcal{Q}=\sum\limits_{x,y = 0}^{m - 1}  \sum\limits_{a,b = 0}^{d-1}  \dfrac{c(ab|xy)}{\sum_{\alpha \beta} c(\alpha \beta|xy)} \left [ s_{xy}^{ab} + \dfrac{s_{x}^{a}}{m}    +  \dfrac{s_{y}^{b}}{m}   \right ].
\end{equation}\medskip

\noindent Finally, Gaussian error propagation and the Poisson statistics of the recorded coincidence count are considered to calculate $\Delta \mathcal{Q}$. The Possonian nature of the coincidence counts gives squared error  $(\Delta c(ab|xy))^2=c(ab|xy)$. The general expression for the experimental error is then
\begin{equation}\tag{S3}
    \Delta \mathcal{Q} = \sqrt{ \sum_{abxy}\left( \dfrac{\partial \mathcal{Q}}{\partial c(ab|xy)} \right)^2 c(ab|xy)},
\end{equation}
\noindent and straightforward calculation leads to 
\begin{equation}\tag{S4}
\dfrac{\partial \mathcal{Q}}{\partial c(a'b'|x'y')} = \dfrac{1 }{ \left( \sum_{ab} c(ab|x'y') \right)^2 } \left[  \left( s_{x'y'}^{a'b'}  + \dfrac{1}{m}\left( s_{x'}^{a'} + s_{y'}^{b'} \right) \right) \sum_{ab} c(ab|x'y') \right. - \left.  \sum_{ab}\left( s_{x'y'}^{ab}  + \dfrac{1}{m}\left( s_{x'}^{a} + s_{y'}^{b} \right) \right) c(ab|x'y')   \right].
\end{equation}

\begin{figure}[t]
\centering
\includegraphics[width=0.58\textwidth]{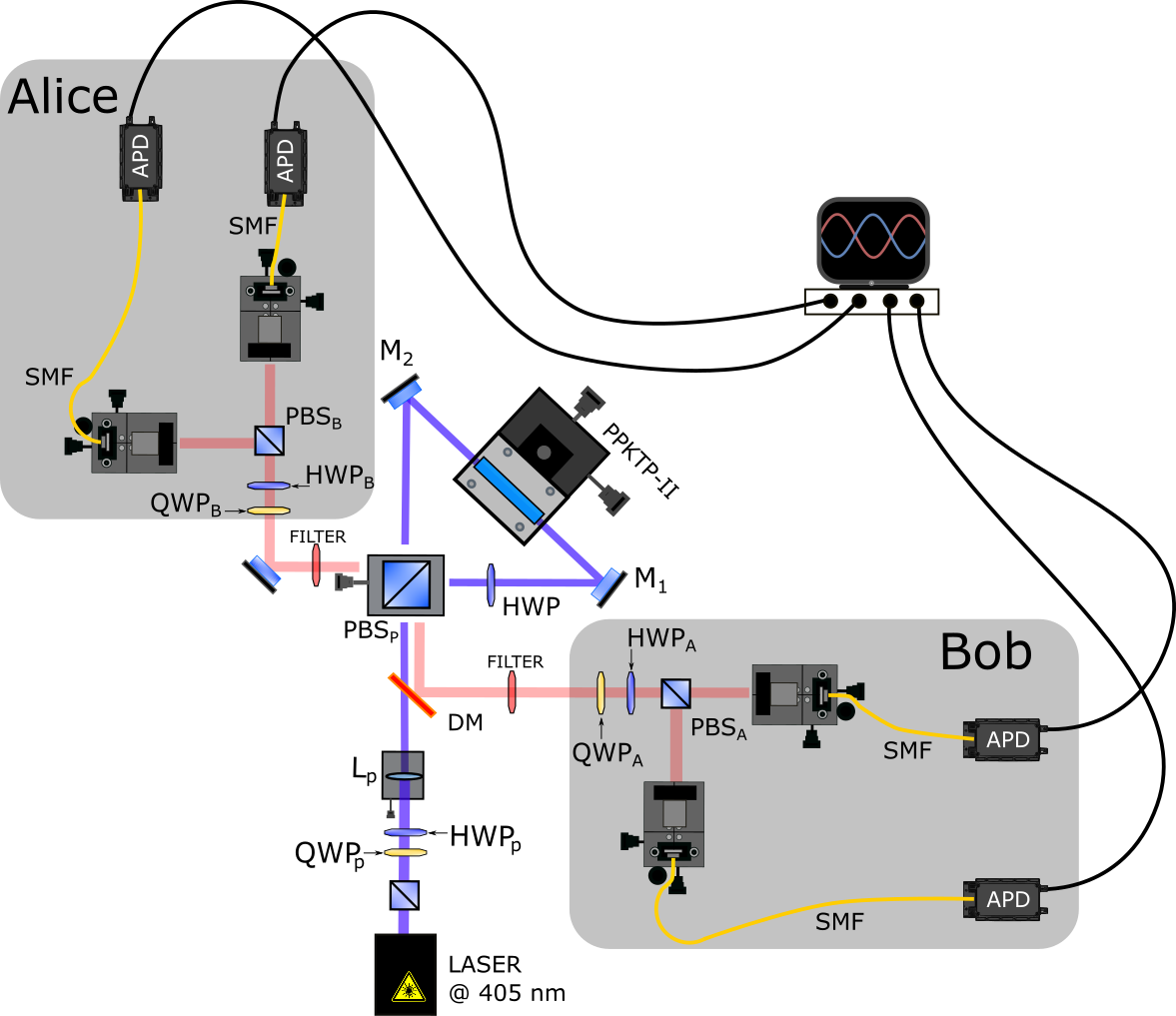}
\caption{Experimental setup used in Ref. \cite{GMMGCJAL19} to implement randomness certification and self-testing using a tunable, high-quality source of polarization-entangled down-converted photons}
\label{Fig3}
\end{figure}

\section{Experimental Details}
\label{sec:exp}
For testing our method, we use the statistics recorded in Ref. \cite{GMMGCJAL19}, where the authors aim to study randomness certification behavior and self-testing in a practical Bell scenario, considering five different partially entangled states (PES). The experiment (depicted in Fig.\ref{Fig3}) was performed using a high-purity, tunable polarization entanglement source of photons generated in the spontaneous parametric down-conversion (SPDC) process. A Sagnac interferometer, composed of two laser mirrors (M$_{1}$ and M$_{2}$), a half-wave plate (HWP), and a polarizing beam-splitter (PBS$_{p}$) cube, combined with a type-II periodically poled potassium titanyl phosphate (PPKTP) nonlinear crystal were used. The PPKTP crystal was pumped by a continuous-wave laser, operating at $405$ nm, to create degenerate down-converted photons at $810$ nm. The two propagation modes inside the interferometer (clockwise and counter-clockwise) for the generated down-converted photons overlap inside the PBS$_p$ cube, resulting in the polarized-entangled state
\begin{equation}
|\psi(\vartheta)\rangle=\cos(\vartheta) \ket{HV} + \sin(\vartheta) \ket{VH},\nonumber
\end{equation}
where the angle $\vartheta$ defines the linear polarization mode of the pump beam $\cos(\vartheta) \vert H \rangle + \sin(\vartheta) \vert V \rangle$. Therefore, the amount of entanglement, given by the concurrence $C=\sin(2\vartheta)$, can be adjusted using the half-wave (HWP$_{p}$) and the quarter-wave plate (QWP$_{p}$) located at the pump beam propagation path. To ensure the degenerate generation of the down-converted photons, Semrock high-quality narrow bandpass filters centered at $810$ nm were used, with $0.5$ nm of bandwidth and a peak transmission $>90\%$. Furthermore, to prevent distinguishability between the spatial and polarization modes, the authors couple the generated down-converted photons into single-mode optical fibers. To maximize the coincidence counts, they follow a numerical model proposed in Ref. \cite{LT05}. The optimal coupling condition is reached when $\omega_{SPDC}=\sqrt{2}\omega_{p}$, where $\omega_{p}$  and $\omega_{SPDC}$  are the waist mode of the pump beam and the down-converted photon at the center of the PPKTP crystal, respectively. These conditions were satisfied using a $20$ cm focal length for $L_{p}$ lens and $10$X objective lenses to couple the down-converted photons into the optical fibers.

The local projective measurements involved in the tilted Bell inequality (Eq. 7, main text) were implemented using the typical polarization analyzer, which consists in the HWP$_{A}$ (HWP$_{B}$), the QWP$_{A}$ (QWP$_{B}$), and the PBS$_{A}$ (PBS$_{B}$) for Alice (Bob). To reach the high overall visibility required for randomness certification and self-testing, an electronic circuit capable of implementing up to $500$ ps coincidence window was used, reducing the accidental coincidence rate probability \cite{GMGCFAL18, GGGCBD16}. PerkinElmer single-photon avalanche detectors were placed at the output mode for each PBS to record the photon statistic and estimate the set of probabilities $p(a,b|x,y)$ used for our analysis. The overall two-photon visibility obtained was $(99.7\pm 0.3)\%$ while the logical and diagonal polarization bases were measured.

\section{Canonical form of Bell inequalities}\label{ap:canonical}
To transform any bipartite Bell inequality with $m$ settings and two outcomes to its canonical form, i.e. depending on outputs $a=b=0$ only, the following identities have to be considered for local
\begin{eqnarray}
p_A(1|x)&=&1-p_A(0|x)\nonumber\\
p_B(1|y)&=&1-p_B(0|y),\nonumber
\end{eqnarray}
and joint probabilities
\begin{eqnarray}
p(0,1|x,y)&=&p_A(0|x)-p(0,0|x,y)\nonumber\\
p(1,0|x,y)&=&p_B(0|y)-p(0,0|x,y)\nonumber\\
p(1,1|x,y)&=&1-p_A(0|x)-p_B(0|y)+p(0,0|x,y)\nonumber,
\end{eqnarray}
for every $x,y=0,m-1$.

\end{document}